\newtheorem{thm}{Theorem}
\newtheorem{claim}{Claim}
\newtheorem{conjecture}{Conjecture}
\theoremstyle{definition}
\begin{document}

\renewcommand{\thefootnote}{\arabic{footnote}}

\title{Some Counterexamples for Compatible Triangulations}
\author{Cody Barnson\thanks{Department of Mathematics and Computer Science, University of Lethbridge, {\tt cody.barnson@uleth.ca}}
\and Dawn Chandler\thanks{School of Computing Science, Simon Fraser University, {\tt dschandl@sfu.ca}}
\and Qiao Chen\thanks{School of Computing Science, Simon Fraser University, {\tt qiaoc@sfu.ca}} 
\and Christina Chung\thanks{Department of Computer Science, University of Toronto, {\tt chr.chung@mail.utoronto.ca}}
\and Andrew Coccimiglio\thanks{School of Computing Science, Simon Fraser University, {\tt acoccimi@sfu.ca}}
\and Sean La\thanks{Department of Mathematics, Simon Fraser University, {\tt sean\_la@sfu.ca}} 
\and Lily Li\thanks{Department of Mathematics and the School of Computing Science, Simon Fraser University, {\tt xyl9@sfu.ca}}
\and A\"ina Linn\thanks{Department of Computer Science, McGill University, {\tt aina.georges@mail.mcgill.ca}}
\and Anna Lubiw\thanks{Cheriton School of Computer Science, University of Waterloo, {\tt alubiw@uwaterloo.ca}}
\and Clare Lyle\thanks{School of Computer Science, McGill University, {\tt clare.lyle@outlook.com}}
\and Shikha Mahajan\thanks{Cheriton School of Computer Science, University of Waterloo, {\tt s7mahaja@uwaterloo.ca}}
\and Gregory Mierzwinski\thanks{Department of Computer Science, Bishop's University, {\tt  gmierzwinski@outlook.com}} 
\and Simon Pratt\thanks{Cheriton School of Computer Science, University of Waterloo, {\tt Simon.Pratt@uwaterloo.ca}}
\and Yoon Su (Matthias) Yoo\thanks{Department of Computer Science, University of Manitoba, {\tt saiihttam@gmail.com}}
\and Hongbo Zhang\thanks{Cheriton School of Computer Science, University of Waterloo, {\tt hongbo.zhang@uwaterloo.ca}}
\and Kevin Zhang\thanks{Faculty of Applied Science, University of British Columbia, {\tt kevin@kouver.ca}}
}

\thanksmarkseries{arabic}

\maketitle

\begin{abstract}
We consider the conjecture by Aichholzer, Aurenhammer, Hurtado, and Krasser that any two points sets with the same cardinality and the same size convex hull can be triangulated in the ``same'' way, more precisely via  \emph{compatible triangulations}.
We show counterexamples to various strengthened versions of this conjecture.
\end{abstract}

\section{Introduction}

Let $P$ and $Q$ be two point sets, each with $n$ points, and with $h$ points on their convex hulls.  Triangulations $T_P$ of $P$ and $T_Q$ of $Q$ are \emph{compatible} if there is a bijection $\phi$ from $P$ to $Q$ such that $abc$ is a clockwise-ordered triangle of $T_P$ if and only if $\phi(a)\phi(b)\phi(c)$ is a clockwise-ordered triangle of $T_Q$.

A main application of compatible triangulations is the problem of morphing.
If we have compatible triangulations of two point sets in the plane then the area bounded by the convex hull of the first set can be morphed to the area bounded by the convex hull of the second set. 
In the case of very similar point sets, a linear motion of each triangle will work.  For the general case, there are more sophisticated planarity-preserving morphs~\cite{GS,morph-big}.

In 2003 Aichholzer, Aurenhammer, Hurtado, and Krasser~\cite{AICHHOLZER20033} conjectured that compatible triangulations always exist:

\begin{conjecture}
If $P$ and $Q$ are point sets in general position (i.e.~with no 3 points collinear) with the same cardinality and the same size convex hull then they have compatible triangulations.
\label{conj:compatible-exists}
\end{conjecture}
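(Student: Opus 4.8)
The plan is to reduce the problem to the interior points and induct on their number. Since $P$ and $Q$ have the same hull size $h$, I would first fix an orientation-preserving bijection $\phi_0$ between the two convex hulls that respects the cyclic order of the hull vertices; there are $h$ such rotational choices, all of which I would keep available during the recursion. Writing $k = n - h$ for the number of interior points, the induction is on $k$. In the base case $k = 0$ both point sets are in convex position, and the fan triangulation of $P$ from any vertex is compatible with the corresponding fan of $Q$ under $\phi_0$, so the statement holds.

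For the inductive step, suppose $k \geq 1$. The idea is to remove one interior point from each set, triangulate compatibly by induction, and then reinsert. Concretely, delete an interior point $p$ from $P$ and an interior point $q$ from $Q$; since interior points do not lie on the hull, the reduced sets $P' = P \setminus \{p\}$ and $Q' = Q \setminus \{q\}$ still have $n-1$ points and $h$ hull points, hence $k-1$ interior points, and the inductive hypothesis supplies compatible triangulations $T_{P'}$ and $T_{Q'}$ with bijection $\phi'$ extending $\phi_0$. By general position $p$ lies strictly inside a unique triangle $abc$ of $T_{P'}$ and $q$ lies strictly inside a unique triangle of $T_{Q'}$. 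If the triangle containing $q$ is exactly $\phi'(a)\phi'(b)\phi'(c)$, then setting $\phi(p) = q$ and splitting each of these two triangles into three by joining the new point to its three corners produces triangulations of $P$ and $Q$ that are compatible, completing the induction.

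The hard part is the matching condition: there is no reason the triangle of $T_{Q'}$ containing $q$ should be the image under $\phi'$ of the triangle of $T_{P'}$ containing $p$. The available freedom is the choice of which interior points $p$ and $q$ to delete and the choice, within the inductive hypothesis, of which compatible pair $(T_{P'}, T_{Q'})$ to use; one could also try to repair a mismatch by flipping edges, but a flip that is legal in $P'$ need not keep its partner legal in $Q'$, since the corresponding quadrilateral may fail to be convex. Thus the entire argument rests on a combinatorial existence lemma, namely that some choice of deleted points and compatible sub-triangulation places $p$ and $q$ in corresponding triangles, and I expect this to be the crux and the likely point of failure. Indeed, the natural strengthenings one would want here, for instance that every orientation-preserving hull bijection extends to a compatible triangulation, or that one can always reinsert into matched triangles, are exactly the kind of statement that the counterexamples in this paper are designed to refute.
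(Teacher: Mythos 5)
You are attempting to prove Conjecture~\ref{conj:compatible-exists}, which this paper does not prove: it is an open conjecture of Aichholzer, Aurenhammer, Hurtado and Krasser, and the paper's contribution is a collection of \emph{counterexamples to strengthened versions} of it (collinear points, partially prescribed mappings, prescribed edges). So there is no proof in the paper to compare yours against; the only question is whether your argument stands on its own, and it does not.

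The gap is the one you name yourself, and it is fatal as written: the matching condition in the inductive step. The inductive hypothesis hands you \emph{some} compatible pair $(T_{P'},T_{Q'})$ with \emph{some} bijection $\phi'$; once these are fixed, the triangle of $T_{P'}$ containing $p$ and the triangle of $T_{Q'}$ containing $q$ are determined by the geometry, and nothing forces them to correspond under $\phi'$. You enumerate the available freedoms (choice of $p$ and $q$, choice of hull rotation, choice of compatible pair) but prove no lemma that exploits them, so the induction does not close. Moreover, the paper's own counterexamples show that the natural forms of such a lemma are false: Figure~\ref{fig:fixed:points:counterexample} exhibits general-position point sets where fixing the hull mapping together with the image of a single interior point already makes compatible triangulations impossible. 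Hence a matching lemma cannot hold for an arbitrary choice of $p$, $q$, and hull rotation; any correct version must use the freedom of choice in an essential way that nobody currently knows how to do. Your fallback of repairing a mismatch by edge flips fails for the reason you note (a flip legal in $T_{P'}$ need not have a legal partner flip in $T_{Q'}$, since the corresponding quadrilateral may be non-convex), and there is no known result that compatible flips connect the space of compatible triangulation pairs. The base case and the reinsertion mechanics are fine; the missing crux is essentially as hard as the conjecture itself, which remains open.
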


We prove that several strengthened forms of the conjecture are false.

\subsection{Definitions and Preliminaries}

A \emph{triangulation} of a finite set of points, $P$, in the plane is a maximal set of segments $pq$, $p,q \in P$ such that no two segments $ab$ and $cd$ \emph{cross}, i.e.~intersect in a point that is not a common endpoint of the segments.
We assume that the points $P$ do not all lie on one line.  Then a triangulation is a planar graph whose interior faces are all triangles~\cite{devadoss-orourke}.

From the definition we immediately get:

\begin{claim}  Consider a set of points $P$.  If $a,b \in P$ have the property that the interior of segment $ab$ is not intersected by any other segment $pq$ with $p,q \in P$, then $ab$ must be part of any triangulation of $P$.  
\label{claim:forced-edge}
\end{claim}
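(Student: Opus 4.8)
The plan is to prove the contrapositive structure directly: I want to show that if segment $ab$ (with $a,b\in P$) is not crossed in its interior by any other segment $pq$ with endpoints in $P$, then $ab$ appears in every triangulation of $P$. The key observation is that a triangulation is defined as a \emph{maximal} crossing-free set of segments, so maximality is the engine of the argument.

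First I would fix an arbitrary triangulation $T$ of $P$ and argue by contradiction, supposing that $ab \notin T$. The goal is to contradict maximality by showing that $ab$ could be added to $T$ without creating a crossing. Concretely, I would consider the set $T \cup \{ab\}$ and check that it remains crossing-free. By hypothesis, the interior of $ab$ is not intersected by any segment $pq$ with $p,q \in P$; every segment of $T$ is precisely such a segment (its endpoints lie in $P$), so in particular no segment of $T$ crosses the interior of $ab$. Hence $ab$ does not cross any edge of $T$, and $T\cup\{ab\}$ is a strictly larger crossing-free set of segments, contradicting the maximality of $T$. Therefore $ab \in T$, and since $T$ was arbitrary, $ab$ lies in every triangulation.

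The one subtlety I expect to be the main (albeit minor) obstacle is the precise reading of the word ``cross'' in the definition, since crossing is defined as intersection at a point that is \emph{not} a common endpoint. If some point of $P$ were to lie in the interior of $ab$, or if an edge of $T$ shared an endpoint with $ab$ and overlapped it collinearly, one might worry about degenerate incidences that are not literally ``crossings.'' I would handle this by noting that the hypothesis explicitly rules out any segment $pq$ intersecting the \emph{interior} of $ab$, which covers both transversal crossings and collinear overlaps, and by invoking the standing general-position assumption to exclude a third point of $P$ sitting inside $ab$. With those degeneracies excluded, the intersection analysis reduces to the clean statement that adding $ab$ introduces no crossing.

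Thus the whole argument is essentially a one-line appeal to maximality once the crossing-freeness of $T \cup \{ab\}$ is established, and the remaining work is simply to verify carefully that ``no segment intersects the interior of $ab$'' is exactly the condition needed to guarantee that insertion preserves the crossing-free property.
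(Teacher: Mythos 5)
Your core argument---maximality of a triangulation plus the hypothesis that nothing meets the interior of $ab$---is exactly the intended one; the paper offers no proof at all, presenting the claim as following ``immediately'' from the definition, and your first two paragraphs spell out precisely that one-line maximality argument. So in structure you match the paper.

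However, your handling of the degenerate cases has a genuine flaw, and it matters in this paper: you invoke ``the standing general-position assumption,'' but there is no such assumption here. Claim~\ref{claim:forced-edge} is applied in Section~2 to point sets that deliberately contain collinear points, so a proof that leans on general position cannot be used where the claim is actually needed. The appeal is also unnecessary for the purpose you used it for: if a third point $c \in P$ lay in the interior of $ab$, then the segment $ca$ (which has both endpoints in $P$) would intersect the interior of $ab$, so this situation is already excluded by the hypothesis itself. The case your analysis genuinely misses is a crossing at an \emph{endpoint} of $ab$: the paper's definition of ``cross'' only exempts intersection at a \emph{common} endpoint, so if some edge $pq$ of $T$ contained the point $a$ (or $b$) in its interior, then $pq$ and $ab$ would cross at $a$, and this is not ruled out by the hypothesis, which constrains only the interior of $ab$. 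Indeed, with the definition read completely literally the claim can even fail: take $P = \{p,q,a,r\}$ with $a$ the midpoint of $pq$ and $r$ off the line; then $\{pq, pr, qr\}$ is a maximal crossing-free set that omits $ar$, even though no segment meets the interior of $ar$. To close this case one must use the standard convention, implicit in the paper, that an edge of a triangulation may not pass through a point of $P$ (every point of $P$ is a vertex); with that convention no edge of $T$ contains $a$ or $b$ in its interior, so every potential crossing with $ab$ would have to occur in the interior of $ab$, and your maximality argument then goes through verbatim.
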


This claim implies that the convex hull segments are part of any triangulation.  We can also use the claim to show that certain other segments must be in any triangulation:

\begin{claim}
Let $p$ be a point of the convex hull of point set $P$.  Suppose that when we remove $p$ from the point set, the interior points $q_1, \ldots, q_t$ become convex hull vertices.  Then the segments $pq_i, i=1 \ldots t$ must be part of any triangulation of $P$.
\label{claim:forced-edges-to-CH}
\end{claim}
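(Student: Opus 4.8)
The plan is to reduce this statement to Claim~\ref{claim:forced-edge}, which says an edge is forced when its interior meets no other segment among the point set. So for each $q_i$ I would show that the segment $pq_i$ cannot be crossed by any segment $ab$ with $a,b\in P$.

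First I would set up the geometry precisely. Let $P'=P\setminus\{p\}$. By hypothesis each $q_i$ is an interior point of $P$ but a convex hull vertex of $P'$. A standard fact about convex hulls is that a point $q$ becomes a hull vertex of $P'$ exactly when removing $p$ ``exposes'' it, which happens precisely when $q$ lies in the triangle-like region cut off by $p$; more usefully, $q_i$ is a vertex of $\mathrm{conv}(P')$ means there is a line through $q_i$ with all of $P'$ strictly on one side. Since $q_i$ is interior to $\mathrm{conv}(P)$ but on the hull of $\mathrm{conv}(P')$, the point $p$ must lie strictly on the opposite side of that supporting line from the rest of $P'$. The key consequence I want to extract is that the open segment $pq_i$ lies in the region $\mathrm{conv}(P)\setminus\mathrm{conv}(P')$, i.e.\ in the ``pocket'' between the two hulls that contains no points of $P$ in its interior and is not separated from $q_i$ by any other point.

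The main step is then to argue no segment $ab$ with $a,b\in P$ crosses the open segment $pq_i$. I would split into cases according to whether the potential crossing edge uses $p$ or not. If $a,b\in P'$, then both endpoints lie in $\mathrm{conv}(P')$, so the closed segment $ab\subseteq\mathrm{conv}(P')$; since the interior of $pq_i$ avoids $\mathrm{conv}(P')$ except possibly at the boundary point $q_i$ itself, any intersection would have to be at $q_i$, which is an endpoint of $pq_i$ and hence not a crossing. If instead one endpoint is $p$, say $a=p$, then $ab$ and $pq_i$ share the endpoint $p$, so again this is not a crossing in the sense of the definition. General position (no three collinear) ensures these are genuine transversal situations and rules out degenerate overlaps along a line.

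The hard part will be making the ``pocket'' claim rigorous: precisely characterizing that the open segment $pq_i$ avoids $\mathrm{conv}(P')$ and proving it cleanly from the fact that $q_i$ is a new hull vertex. I expect to handle this via the supporting line at $q_i$: because $q_i$ is a vertex of $\mathrm{conv}(P')$, I can choose a supporting line $\ell$ through $q_i$ with $P'\setminus\{q_i\}$ strictly on one side; because $q_i$ was interior to $\mathrm{conv}(P)$, the point $p$ must be strictly on the other side of $\ell$, which forces the open segment $pq_i$ to leave $\mathrm{conv}(P')$ immediately. Once that separation is established, the crossing analysis is routine, and applying Claim~\ref{claim:forced-edge} finishes the proof.
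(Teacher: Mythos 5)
Your overall plan is exactly the paper's proof: the paper's entire argument for this claim is the single sentence that each segment $pq_i$ satisfies the condition of Claim~\ref{claim:forced-edge}, and your supporting-line argument is the right way to fill in the details. Choosing a line $\ell$ through $q_i$ with $P'\setminus\{q_i\}$ strictly on one side (possible precisely because $q_i$ is a \emph{vertex} of $\mathrm{conv}(P')$), observing that $p$ must lie strictly on the other side (otherwise $\ell$ would be a supporting line of $\mathrm{conv}(P)$ through $q_i$, contradicting that $q_i$ is interior), and concluding that the open segment $pq_i$ is disjoint from $\mathrm{conv}(P')$ correctly disposes of every potential crossing segment with both endpoints in $P'=P\setminus\{p\}$.

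The gap is in your second case, segments incident to $p$. You dismiss it on the grounds that the two segments ``share the endpoint $p$,'' and you patch the residual degenerate possibility by invoking general position. But the claim is stated with no general-position hypothesis, and the paper applies it specifically to point sets \emph{with} collinear points---that is the whole purpose of the section on collinear counterexamples where this claim is used. Without general position your case genuinely fails: take $p=(0,2)$, $q_1=(0,1)$, $b=(0,0)$, $c=(-1,0)$, $d=(1,0)$. Then $q_1$ is interior to $\mathrm{conv}(P)$ and becomes a hull vertex once $p$ is removed, so the claim applies; yet $b$ lies on the far side of your line $\ell$, is collinear with $p$ and $q_1$, and the segment $pb$ contains the entire interior of $pq_1$. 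So the hypothesis of Claim~\ref{claim:forced-edge} is literally violated and your reduction stops. (Note also that under the paper's definition two segments sharing an endpoint \emph{can} cross, namely when they overlap along a line, so your quoted sentence is false as stated.) To close the gap you need the extra observation that a segment such as $pb$, which passes through the input point $q_i$, can never be an edge of any triangulation: it crosses, in the paper's sense, every segment incident to $q_i$, so a triangulation containing it would leave $q_i$ isolated, contradicting that a triangulation is a planar graph on all of $P$ whose interior faces are triangles. With that observation, the only segments that could meet the interior of $pq_i$ are ones that cannot occur in a triangulation anyway, and the appeal to Claim~\ref{claim:forced-edge} goes through for collinear point sets as well.
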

\begin{proof}
Each segment $pq_i$ satisfies the condition for Claim~\ref{claim:forced-edge}.
\end{proof}

\section{Collinear Points}

Aichholzer et al.~\cite{AICHHOLZER20033} gave a counterexample to show that Conjecture~\ref{conj:compatible-exists} fails if there are collinear points.  Their example has $n=7$ and is shown in Figure~\ref{fig:Aich-counterex}.  In this section we give a smaller counterexample with $n=6$.  We also give an alternate counterexample with $n=7$.  Furthermore, we show that there is no counterexample with $n \le 5$ points.  

\begin{figure}[h]
  \centering
  \includegraphics[width=0.28\linewidth,page=1]{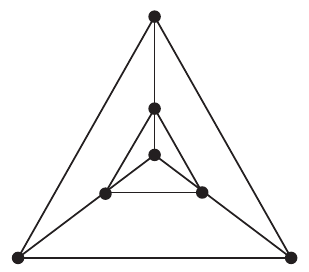}
  \ \ 
  \includegraphics[width=0.28\linewidth,page=2]{Aich-counterex.pdf}
  \caption{The counterexample to Conjecture~\ref{conj:compatible-exists} in case of collinear points from Aichholzer et al.~\cite{AICHHOLZER20033}.}
  \label{fig:Aich-counterex}
\end{figure}

\begin{figure}[h]
  \centering
  \includegraphics[width=0.28\linewidth,page=4]{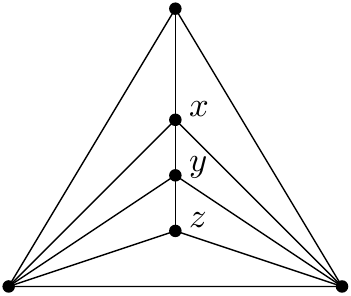}
  \ \ 
  \includegraphics[width=0.28\linewidth,page=5]{6-point-counterexample.pdf}
  \caption{Point sets $P$ and $Q$.}
  \label{fig:6-point-counterexample-1}
\end{figure}

\begin{figure}[h]
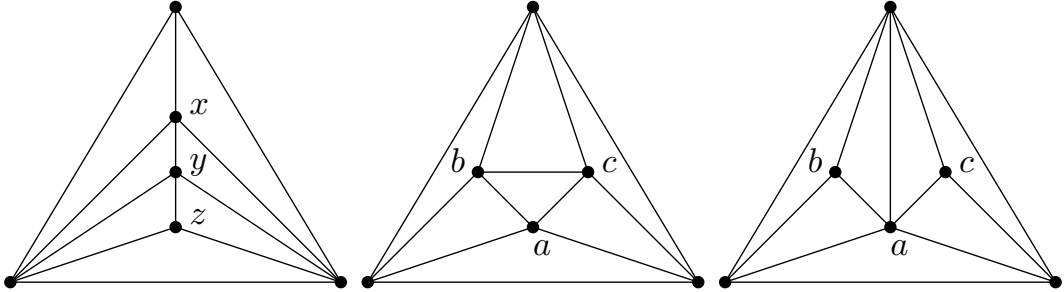

  \centering
  \includegraphics[width=0.28\linewidth,page=1]{6-point-counterexample.pdf}
  \includegraphics[width=0.28\linewidth,page=2]{6-point-counterexample.pdf}
  \includegraphics[width=0.28\linewidth,page=3]{6-point-counterexample.pdf}
  \caption{$T_{P}$, $T_{Q_1}$, and $T_{Q_2}$.}
  \label{fig:6-point-counterexample-2}
\end{figure}

Let $P$ and $Q$ be the left and right point sets of Figure~\ref{fig:6-point-counterexample-1}, respectively. We claim that a compatible triangulation between $P$ and $Q$ does not exist. To do this, first we will prove that the triangulations $T_P$, $T_{Q_1}$ and $T_{Q_2}$ as shown in Figure~\ref{fig:6-point-counterexample-2} are the only triangulations for their point sets. Then, we will show that $T_P$ is compatible with neither $T_{Q_1}$ nor $T_{Q_2}$.

First consider point set $P$.  Observe that every edge of triangulation $T_P$ is forced by Claim~\ref{claim:forced-edge}.

Now consider point set $Q$.  Two edges from each point on the convex hull to $a,b$ and $c$ are forced by Claim~\ref{claim:forced-edges-to-CH}.  If we add an additional edge from a convex hull point  to an interior point we get triangulation $T_{Q_2}$, and its rotationally symmetric copies. If we do not add any additional edges from convex hull points then triangulation $T_{Q_1}$ is the unique possibility. 



To show that $P$ and $Q$ are not compatible, we look at the degrees of the hull points of their triangulations. $T_P$'s hull points have degrees of 3, 5, and 5. $T_{Q_1}$'s have degrees 4, 4, and 4, and $T_{Q_2}$'s have degrees 4, 4, and 5. Therefore, the triangulation $T_P$ is compatible with neither $T_{Q_1}$ nor $T_{Q_2}$. Since these are the only triangulations, we can conclude that no triangulation exists between the point sets $P$ and $Q$.

We give another counterexample with $n=7$ in Figure~\ref{fig:alt-collinear-ex}.  

\begin{figure}[h]
  \centering
  \includegraphics[width=0.35\linewidth,page=1]{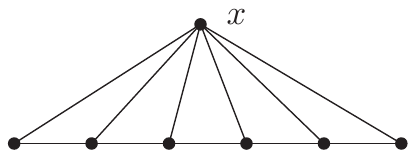}
  \ \ \ \ 
  \includegraphics[width=0.2\linewidth,page=2]{alt-collinear-ex.pdf}
  \caption{Another counterexample to Conjecture~\ref{conj:compatible-exists} for the case of collinear points.  Observe that the left hand triangulation is unique and point $x$ has degree 6, but no point on the right can have degree 6 in any triangulation.}
  \label{fig:alt-collinear-ex}
\end{figure}

\subsection{Compatible triangulations of small point sets with collinearities}

In this subsection we prove that the above counterexample with $n=6$ is the smallest possible. In other words, we prove that any point sets with $n \le 5$ points have compatible triangulations even in the presence of collinear points. 

We do this 
by proving the stronger result that any point sets with at most 2 internal points have compatible triangulations, even in the presence of collinearities.
This implies that any point sets with $n \le 5$ points have compatible triangulations because the convex hull has at least 3 points, leaving at most 2 internal points.

\begin{thm}
If two points sets of size $n$ (possibly with collinear points) have 1 or 2 internal points, then they have a compatible triangulation.
\label{thm:2-internal}
\end{thm}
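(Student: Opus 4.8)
The plan is to first record the combinatorial constraint that makes the statement well-posed, and then treat the two cases separately, using the one-internal-point case as a warm-up. The starting observation is that in any triangulation the hull-boundary edges are forced (Claim~\ref{claim:forced-edge}), that each such edge lies in exactly one triangle while every interior edge lies in exactly two, and that a compatible bijection preserves this incidence. Hence boundary edges must map to boundary edges, the two boundary cycles have equal length, and the two point sets have the same number of internal points. So I may assume both point sets have $m$ boundary points and the \emph{same} internal count, either $1$ or $2$ (if the counts differed no compatible triangulation could exist). Crucially, the number of actual hull \emph{vertices} (corners) need \emph{not} agree between $P$ and $Q$; the constructions below are insensitive to this, which is exactly what makes collinearities harmless.

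\noindent\textbf{One internal point.} Let $p$ be the internal point of $P$. Since $p$ is the only interior point, the open segment from $p$ to any boundary point contains no other point of $P$ (any point strictly between an interior point and a boundary point of a convex region is itself interior), so $p$ sees every boundary point. I would take the \emph{wheel} triangulation consisting of the triangles $p\,b_i\,b_{i+1}$ over the boundary cycle $b_1,\dots,b_m$; each is nondegenerate even when $b_i$ is a flat collinear boundary point, because $p$ lies strictly off the supporting line of every hull edge. Doing the same for $Q$, matching the two internal points and the two boundary cycles in a common clockwise order, yields an orientation-preserving bijection of triangles, so the two wheels are compatible. This case is fully robust to collinearities precisely because every boundary point, corner or flat, is treated identically.

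\noindent\textbf{Two internal points.} Let $p_1,p_2$ be the internal points. The segment $p_1p_2$ lies strictly inside the hull and meets no third point, so $p_1p_2$ can always be taken as an edge. I would then argue that every triangulation containing $p_1p_2$ is a \emph{double fan}: the edge $p_1p_2$ lies in two triangles $p_1p_2x$ and $p_1p_2y$ with $x,y$ boundary points, and $p_1$ (resp.\ $p_2$) fans out to the boundary arc from $x$ to $y$ on its side. Such a triangulation is determined combinatorially by the \emph{split} $(s_1,s_2)$ with $s_1+s_2=m-2$, counting the boundary points strictly between the apexes on each side; two double fans are compatible exactly when they realize the same unordered split $\{s_1,s_2\}$. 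The extreme split $(m-2,0)$ is the \emph{augmented wheel}, in which one internal point is the hub of a full wheel and the other is a degree-$3$ ear inside one of its triangles; it is available whenever the chosen hub is not blocked, i.e.\ whenever the ray from it through the other internal point does not hit a hull vertex.

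To finish, I would produce a split achievable by \emph{both} point sets. The main obstacle, and the only place collinearity bites, is the configuration in which the line through $p_1$ and $p_2$ passes through a boundary vertex on \emph{each} side: then each internal point blocks the other, and neither augmented wheel $(m-2,0)$ nor $(0,m-2)$ is available. I expect to handle this by proving a lemma that, for any configuration, the achievable values of $s_1$ form a contiguous interval that always contains the split induced by a line separating $p_1$ and $p_2$ together with their respective arcs; matching $P$ and $Q$ then reduces to choosing apexes realizing a common $s_1$ in the intersection of their intervals. Verifying the two validity requirements for each candidate double fan — that the apex triangles $p_1p_2x$ and $p_1p_2y$ are empty and that each fan region is star-shaped from its apex — and checking that the two intervals always overlap, is the crux of the argument and is where the remaining elementary but delicate geometric case analysis lives.
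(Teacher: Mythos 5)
Your setup (equal boundary counts via the one-triangle/two-triangle incidence of boundary versus interior edges) and your one-internal-point wheel are correct, and your double fans for two internal points are exactly the paper's construction (Figure~\ref{fig:2-internal}): join two hull points $a,b$ lying on strictly opposite sides of the line $\ell$ through the internal points to both internal points, and fan every other hull point to the internal point on its side. The problem is that everything that makes the theorem nontrivial is left as a declared ``crux'': you never prove that a candidate double fan is crossing-free, and you never prove that $P$ and $Q$ share an achievable split. Those two facts \emph{are} the theorem. The paper discharges them by an explicit choice of apexes: if both point sets have a hull point collinear with their two internal points, map those hull points to each other and take their hull neighbours as $a$ and $b$; otherwise take apexes adjacent to the point where $\ell$ exits the hull, shifting one of them by one hull position so that the two splits agree. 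Your alternative plan (determine the whole interval of achievable splits and intersect) can be made to work, but it is not routine, and as literally stated one step fails: the intervals of achievable \emph{ordered} splits $s_1$ need not intersect. For example, with $m=3$, let $P$ be a triangle with one vertex on $\ell_P$ on the $p_1$ side (its only achievable ordered split is $(1,0)$) and $Q$ a triangle with one vertex on $\ell_Q$ on the $p_2$ side (only $(0,1)$); compatible triangulations exist, but only through the crossed correspondence that exchanges the roles of the two internal points, i.e.\ through the \emph{unordered} split $\{0,1\}$. So the intersection argument must be run on unordered splits (you state the right compatibility criterion but then drop it), and in the case where both point sets are blocked on both sides you additionally need the observation that two hull points collinear with the internal points can never be adjacent on the hull (their segment would be a hull edge containing interior points), which forces $m\ge 4$ and guarantees a common split with both sides nonempty.

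A smaller point: your claim that \emph{every} triangulation containing $p_1p_2$ is a double fan is false. Take hull points $(-2,10)$, $(2,10)$, $(10,-10)$, $(-10,-10)$ and internal points $p_1=(0,0)$, $p_2=(0,2)$; there is a triangulation containing $p_1p_2$ together with the diagonal from $(2,10)$ to $(-10,-10)$, in which $p_2$ is not joined to $(-2,10)$, so neither internal point fans its entire arc. Fortunately this claim is never needed: for existence you only have to exhibit one compatible pair of double fans, so the false classification statement should simply be deleted rather than repaired.
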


Aichholzer et al.~\cite{AICHHOLZER20033} proved this for point sets with no collinearities.  
We give a proof that handles collinearities.
In fact they proved the stronger result that there is a compatible triangulation even when the convex hull mapping is forced, i.e.~when one point of the convex hull of the first point set is mapped to one point of the convex hull of the second point set.  However, that stronger result is no longer true in the presence of collinearities as shown in Figure~\ref{fig:conj-2-false}.

\begin{figure}[h]
  \centering
  \includegraphics[width=0.22\linewidth,page=1]{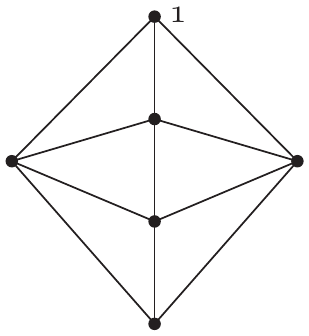}
  \ \ \ \ 
  \includegraphics[width=0.22\linewidth,page=2]{conj-2-false.pdf}
  \caption{With the indicated mapping of convex hull points, these point sets have no compatible triangulation.  Note that the triangulations are unique.}
  \label{fig:conj-2-false}
\end{figure}

\begin{figure}[ht]
  \centering
  \includegraphics[width=0.25\linewidth]{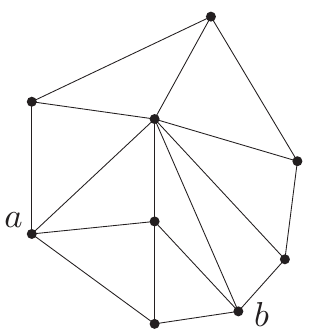}
  \caption{When $a$ and $b$ are on opposite sides of the line through the two internal points, we triangulate by joining $a$ and $b$ to the two internal points, then joining every other convex hull point to the unique internal point it sees.}
  \label{fig:2-internal}
\end{figure}

\begin{proof}[Proof of Theorem~\ref{thm:2-internal}]
For one internal point, construct compatible triangulations by adding edges from the internal point to all points on the convex hull.

Now consider two points sets $P$ and $Q$ that each have two internal points.
Following the proof of 
Aichholzer et al.~\cite{AICHHOLZER20033} we will  construct a line $\ell_P$ through the two internal points of $P$ and  a line $\ell_Q$ through the internal points of $Q$, and argue that we can map two points $a_P$ and $b_P$ of $P$ that lie on opposite sides of $\ell_P$, to two points $a_Q$ and $b_Q$ of $Q$ that lie on opposite sides of $\ell_Q$.  Then we construct compatible triangulations as shown in Figure~\ref{fig:2-internal}.

If $P$ has a convex hull point $c_P$ on the line $\ell_P$ (i.e.~collinear with the two internal points) and $Q$ also has a convex hull point $c_Q$ on the line $\ell_Q$ then we map these points to each other, let $a_P$ and $b_P$ be the two neighbours of $c_P$ on the convex hull of $P$, and let $a_Q$ and $b_Q$ be the two neighbours (in the same clockwise order) of $c_Q$ on the convex hull of $Q$.  Observe that $a_P$ and $b_P$ lie on strictly opposite sides of $\ell_P$ and similarly, $a_Q$ and $b_Q$ lies on strictly opposite sides of $\ell_Q$ in $Q$.  Thus we have found the points we need in this case.

In the remaining situations, one of $P$ or $Q$ has no convex hull points on the line through its interior points.  Suppose without loss of generality that $P$ has this property.  Orient the lines $\ell_P$ and $\ell_Q$ vertically.  Since the convex hull of $P$ has at least 3 points, and none of them lie on $\ell_P$, one side of $\ell_P$ must have at least 2 convex hull points.  Suppose this is the right-hand side. Let $q$ be the bottom point where $\ell_Q$ intersects the convex hull of $Q$. Note that $q$ may or may not be a point of $Q$.  See Figure~\ref{fig:collinear-constr}.

\begin{figure}[htb]
  \centering
  \includegraphics[width=0.5\linewidth]{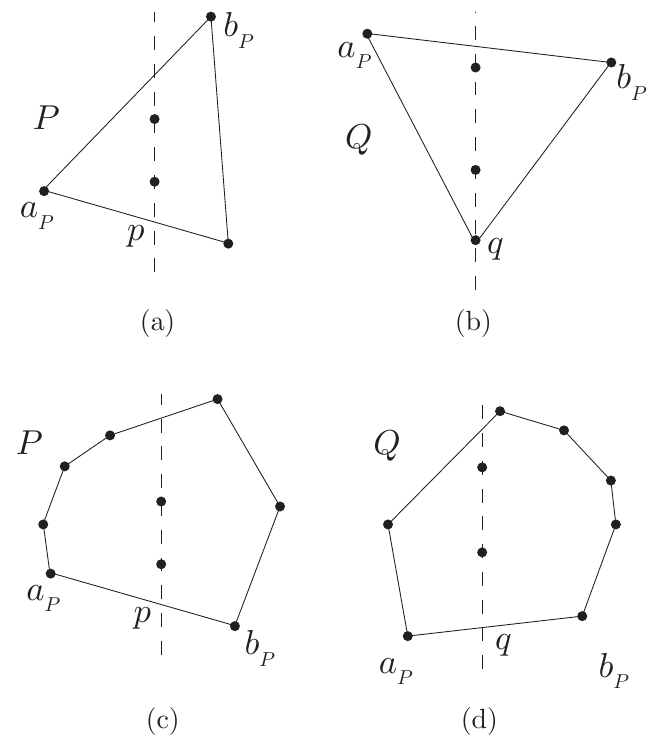}
  \caption{Finding the compatible pair $a_P$ and $a_Q$.}
  \label{fig:collinear-constr}
\end{figure}

Starting from $q$ and considering the clockwise-ordered convex hull of $Q$, let $a_Q$ be the first point of $Q$ strictly after $q$, and let $b_Q$ be the last point of $Q$ strictly before $q$.  If $q$ is not in $Q$ then $a_Q$ and $b_Q$ are adjacent on the convex hull,  and otherwise, they are separated by one convex hull point.

Now consider $P$.  Let $p$ be the bottom point where $\ell_P$ intersects the convex hull of $P$.  Note that $p$ is not in $P$.  Let $a_P$ be the first point of $P$ after $p$ in the clockwise-ordered convex hull. If $q_a$ and $q_b$ are adjacent on the convex hull of $Q$, let $b_P$ be the last point of $P$ before $p$ in the clockwise-ordered convex hull (Figure~\ref{fig:collinear-constr}(c), (d)), and otherwise, let $b_P$ be the second last point ((Figure~\ref{fig:collinear-constr}(a), (b)).  Because $P$ has at least 2 convex hull points to the left of $\ell_P$, $b_P$ is to the left of $\ell_P$.  Thus we have found the points we need. 
%
%
\end{proof}

\section{Specifying part of the mapping}

Any compatible triangulations must map convex hull points to convex hull points, and in the same clockwise order.  There is still one free choice---point $p$ on the convex hull of point set $P$ can be mapped to any point of the convex hull of $Q$.  
Aichholzer, et al.~make the stronger conjecture that compatible triangulations exist even when the mapping of $p$ to a point of the convex hull of $Q$ is fixed.

We examined what happens when the mapping is specified for more points.   
Our example in Figure~\ref{fig:fixed:points:counterexample} shows that the conjecture fails if we specify the mapping of the convex hull and of one internal point.

\begin{figure}[h]
  \centering
  \includegraphics[width=0.28\linewidth,page=1]{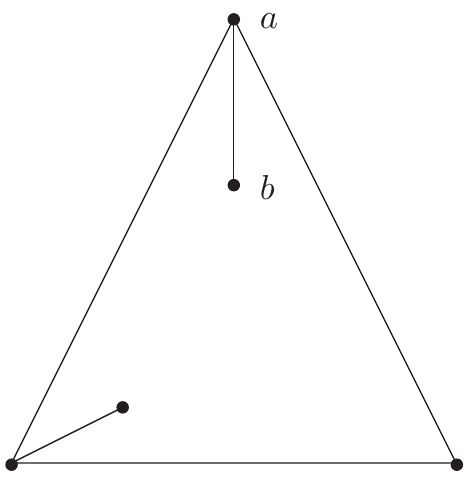}
  \ \ \ \ 
  \includegraphics[width=0.28\linewidth,page=2]{mapped-points-counterexample.pdf}
  \caption{When the mapping of points $a$ and $b$ is specified there is no compatible triangulation because the forced edges shown on the left cross on the right.}
  \label{fig:fixed:points:counterexample}
\end{figure}

\section{Specifying some of the edges}

The convex hull edges must be present in any triangulation of a point set, i.e.~they are ``forced''.    We examined what happens when we specify other edges that must be present in the triangulation.  Note that two simple polygons on $n$ points do not always have compatible triangulations~\cite{aronov1993compatible}, so if we specify $n$ edges that must be in the triangulation then there may be no compatible triangluation.  

Our example in Figure~\ref{fig:fixed:edge:counterexample}
shows that even specifying one non-convex hull edge that must be included in the triangulation leads to a counterexample.  Observe that for the point set on the left, every convex hull point is forced to have edges to two internal points by Claim~\ref{claim:forced-edges-to-CH}.  However, the required edge on the right prevents the left-hand convex hull point from having edges to two internal points.

\begin{figure}[h]
  \centering
  \includegraphics[width=0.28\linewidth,page=1]{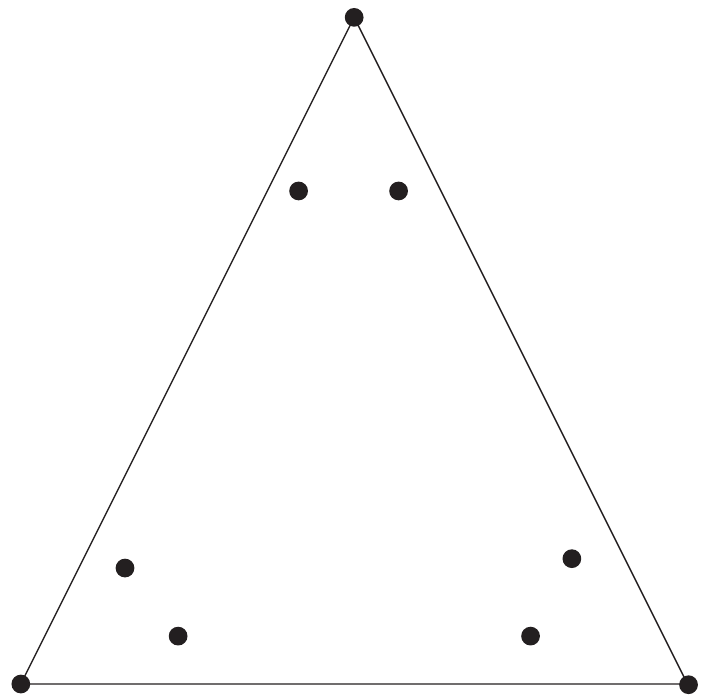}
  \ \ \ \ 
  \includegraphics[width=0.28\linewidth,page=2]{fixed-edge-counterexample.pdf}
  \caption{If we require the internal edge shown on the right, then there is no compatible triangulation between these point sets.}
  \label{fig:fixed:edge:counterexample}
\end{figure}

\section{Steiner points when the mapping is specified}

Another direction of research on compatible trianglations is to allow the addition of extra points, called \emph{Steiner points}.

Although a proof of Conjecture~\ref{conj:compatible-exists} would imply that two point sets have compatible triangulations with no Steiner points, there has been some partial progress on bounding the number of Steiner points needed.  Danciger, et al.~\cite{danciger2006compatible} showed that two Steiner points suffice if they may be placed far outside the convex hull of the points.  Aichholzer, et al.~\cite{AICHHOLZER20033} showed that a linear number of Steiner points inside the convex hull always suffice.  This is in contrast to the case of two polygons where a quadratic number of Steiner points always suffice and are sometimes necessary~\cite{aronov1993compatible}. 

In this section we explore the situation, which arises in some applications, where the mapping of the points is given as part of the input.  
In this case, compatible triangulations might not exist, as shown in Figure~\ref{fig:fixed:points:counterexample}. In order to get compatible triangulations we must add Steiner points.

\begin{figure}[h]
  \centering
  \includegraphics[width=0.4\linewidth,page=1]{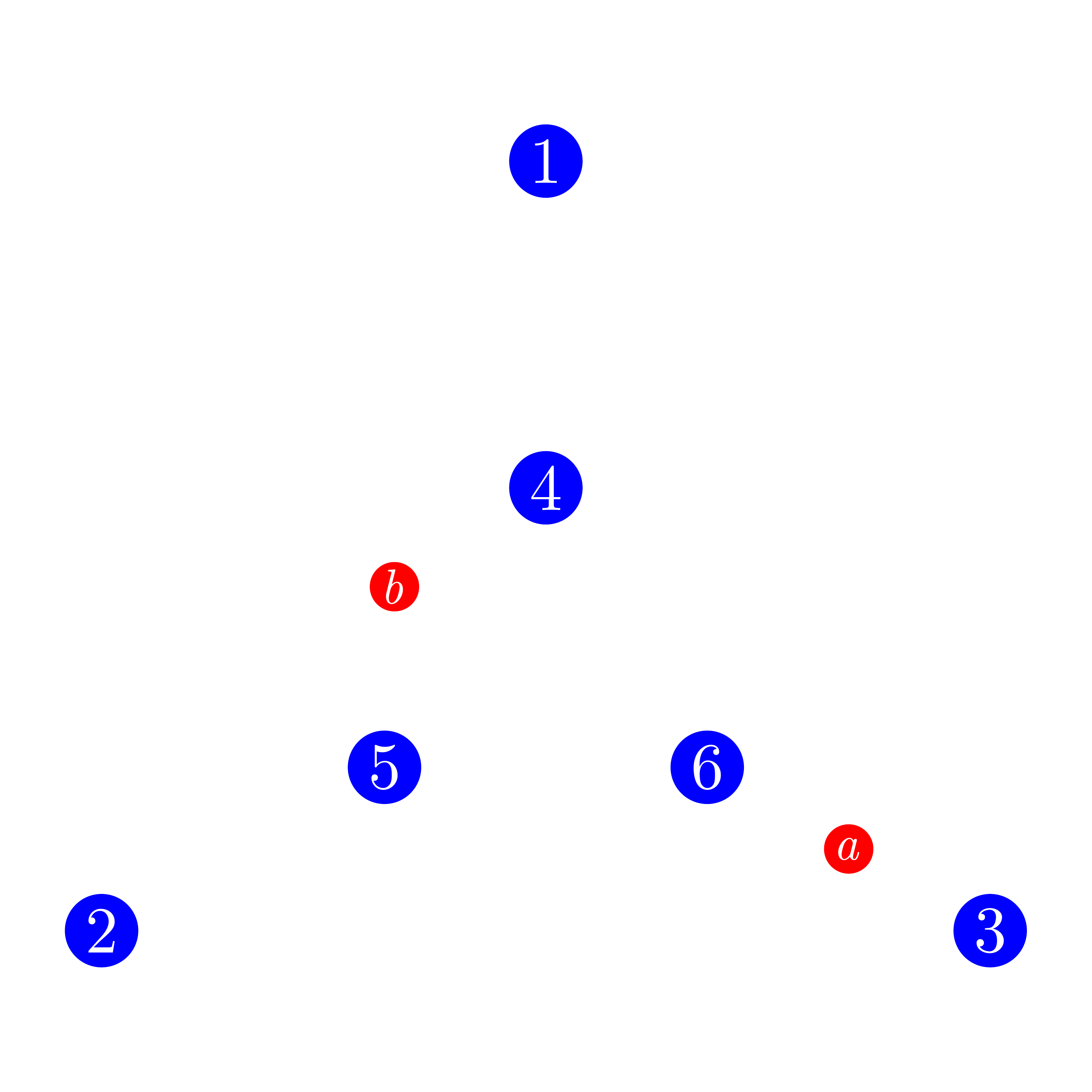}
  \ \ \ \ \ 
  \includegraphics[width=0.4\linewidth,page=2]{Steiner-points.pdf}
  \caption{The two blue point sets, with the indicated mapping of points, do not have compatible triangulations, but the addition of the two small red points allows compatible triangulations as shown in Figure~\ref{fig:Steiner-triangulations}.}
  \label{fig:Steiner-point-sets}
\end{figure}

In Figure~\ref{fig:Steiner-point-sets} we give an example of two mapped point sets with 6 points each, where compatible triangulations require the addition of two Steiner points.  A solution with two Steiner points is shown in Figure~\ref{fig:Steiner-triangulations}, and we argue below that one Steiner point is not sufficient.

It would be interesting to find a polynomial time algorithm to test if two mapped point sets have compatible triangulations with no Steiner points, or to show that the problem is NP-complete.

\begin{figure}[h]
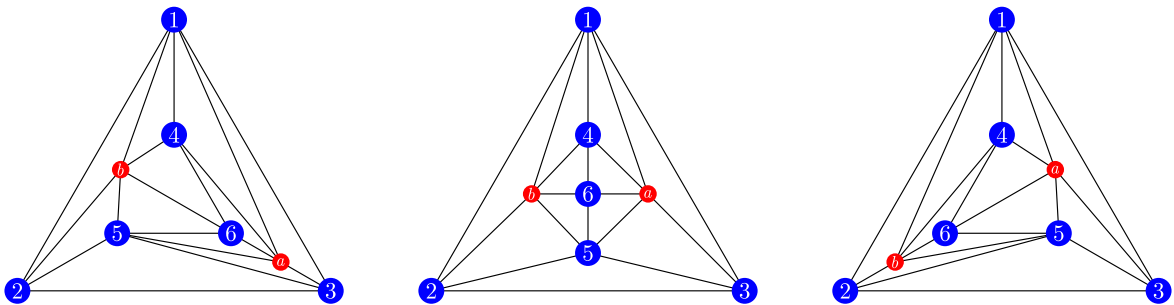

  \centering
  \includegraphics[width=0.31\linewidth,page=3]{Steiner-points.pdf}
  \ \ 
  \includegraphics[width=0.31\linewidth,page=4]{Steiner-points.pdf}
  \ \ 
  \includegraphics[width=0.31\linewidth,page=5]{Steiner-points.pdf}
  \caption{The two triangulations (left and right) are compatible---this can be seen most easily by comparing each one with the intermediate configuration (center).}
  \label{fig:Steiner-triangulations}
\end{figure}

\begin{claim}
The point sets shown in Figure~\ref{fig:Steiner-point-sets} do not have compatible triangulations with just one Steiner point. 
\end{claim}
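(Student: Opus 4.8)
The plan is to argue by contradiction: assume a single Steiner point $s$ admits compatible triangulations $T_P$ of $P \cup \{s_P\}$ and $T_Q$ of $Q \cup \{s_Q\}$, where $s_P$ and $s_Q$ are the images of the Steiner point under the given bijection. Because the mapping of the six original points is fixed as part of the input, the only free choice is the position of $s_P$ in the plane relative to $P$ and the position of $s_Q$ relative to $Q$; compatibility forces the combinatorial (clockwise-order) relationship between $s$ and each original point to agree on both sides. First I would record, on each side separately, the edges that are forced by Claim~\ref{claim:forced-edge} and Claim~\ref{claim:forced-edges-to-CH}. As in Figure~\ref{fig:fixed:points:counterexample}, the obstruction in the zero-Steiner-point case is that certain forced edges on one point set map to crossing segments on the other; the Steiner point's job is to be inserted so as to ``break'' these crossings by absorbing the conflicting edges into triangles incident to $s$.

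The key steps, in order, would be: (1) identify precisely which pair (or pairs) of edges must appear in any triangulation of the original points but are mutually crossing after the fixed mapping is applied --- this is exactly the conflict exhibited in the no-Steiner-point counterexample; (2) observe that to resolve a crossing between two forced edges $e_1$ and $e_2$, the Steiner point must be placed so that it lies, combinatorially, inside a region bounded by the endpoints of both conflicting configurations simultaneously, so that one of the offending edges can be replaced by two edges through $s$; (3) argue that a single Steiner point can only locally repair one such region, and then show that the example in Figure~\ref{fig:Steiner-point-sets} contains \emph{two} independent conflicts whose repair regions are disjoint --- no single placement of $s$ (with its forced image $s_Q$) can simultaneously lie in both. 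This is the step where I would lean most heavily on the explicit coordinates of the figure, enumerating the small number of combinatorial positions $s_P$ can occupy (which convex-hull region, which side of each forced edge) and checking that each such position leaves at least one crossing unresolved on the $Q$ side.

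The main obstacle I anticipate is the case analysis in step (3): I must make the argument exhaustive rather than merely illustrative. The difficulty is that the Steiner point $s_P$ can in principle sit in any of the triangular or quadrilateral faces determined by the forced edges of $P$, and for each candidate face I have to verify both that the resulting triangulation is valid on the $P$ side and that its forced image $s_Q$ (determined by compatibility) fails to yield a valid planar triangulation on the $Q$ side, or vice versa. The cleanest route is probably a degree/forced-edge counting argument analogous to the one used for the $n=6$ collinear counterexample: show that resolving the conflict on one side forces $s$ to take on a combinatorial role (e.g.\ a particular adjacency or a particular degree) that is incompatible with what resolving the conflict on the other side demands. I would structure the final write-up around partitioning the possible placements of $s_P$ into a few classes by which forced region they fall in, and dispatching each class by pointing to a specific edge crossing that survives on the image side.
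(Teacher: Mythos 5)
Your setup matches the paper's: you correctly identify that the obstruction is a pair of conflicts between forced edges and the fixed mapping (Claim~\ref{claim:forced-edges-to-CH} forces $(3,6)$ and $(2,5)$ in $P$, which cross in $Q$, and forces $(2,6)$ and $(3,5)$ in $Q$, which cross in $P$), and that the single Steiner point must therefore be placed so as to ``un-force'' one edge from each pair, giving a small case analysis over placements. This is exactly how the paper's proof begins. However, your key step (3) --- that the two conflicts have \emph{disjoint repair regions}, so that no placement of the Steiner point can service both --- is not the actual obstruction, and as stated it would fail. In the paper's first case the Steiner point \emph{can} be placed so as to break $(3,6)$ in $P$ and, in the analogous region, $(3,5)$ in $Q$; the placement constraints are simultaneously satisfiable. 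The contradiction is subtler: the edges that were \emph{not} broken, namely $(2,5)$ (still forced in $P$) and $(2,6)$ (still forced in $Q$), must then appear in \emph{both} triangulations by compatibility, and these two edges occur in different cyclic orders around point $2$ in $P$ than in $Q$. In the other essentially distinct case (break $(3,6)$ in $P$ and $(2,6)$ in $Q$), the contradiction is again not a placement impossibility but a derived crossing: Claim~\ref{claim:forced-edges-to-CH} forces $(a,3)$ and $(a,6)$ in $P$, the cyclic order of neighbours of point $3$ forces $a$ below the line through $3$ and $5$ in $P$, and then $(a,6)$ crosses the still-forced edge $(3,5)$.

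A second, related problem is that the phrase ``no single placement of $s$ \ldots can simultaneously lie in both'' conflates $s_P$ and $s_Q$, which live in different planes and may occupy geometrically unrelated positions; the only link between them is combinatorial (identical clockwise neighbour orders at every mapped point). So any impossibility argument must be phrased in terms of cyclic orders and adjacencies, not in terms of a common geometric region --- which is precisely what the paper's two case arguments do. Your fallback remark that one should ``show that resolving the conflict on one side forces $s$ to take on a combinatorial role \ldots incompatible with what resolving the conflict on the other side demands'' points in the right direction, but the proposal never carries this out, and the primary mechanism you propose in its place (disjoint repair regions) is the step that breaks. To complete the proof you would need to replace step (3) with the cyclic-order analysis around points $2$ and $3$ described above, carried out in each of the four (two up to symmetry) break-choice cases.
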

\begin{proof}
Note that in compatible triangulations, each point must have the same neighbours in clockwise order. Using this property, we can show that one Steiner point is not enough to compatibly triangulate the two point sets with the given mapping. 

Claim~\ref{claim:forced-edges-to-CH} forces edges $(3,6)$, $(2,5)$ in $P$ and edges $(2,6)$, $(3,5)$ in $Q$. Observe that edges $(3,5)$ and $(2,6)$ cross in $P$ and edges $(3,6)$ and $(2,5)$ cross in $Q$. To obtain compatible triangulations of the point sets with a single Steiner point $a$, we must place $a$ so that it eliminates the crossing edges. Hence $a$ must be placed such that it breaks one of $(3,5)$ and $(2,6)$ and one of $(3,6)$ and $(2,5)$. 

First consider placing $a$ so that it breaks edge $(3,6)$ in $P$ and edge $(3,5)$ in $Q$ as shown in Figure~\ref{fig:Need-Two-Steiner-Points_Case1}. Note that in $P$, $a$ must be placed to the left of the line through 1 and 6, and below the line through 2 and 6.  Point $a$ is restricted to the analogous region in $Q$.
In $P$, by Claim~\ref{claim:forced-edges-to-CH}, we must have the edge $(2,5)$.  Similarly, in $Q$ we must have the edge $(2,6)$.  Thus these edges must be in both triangulations.  However, the cyclic order of the two edges around point $2$ is different in $P$ than in $Q$. 
Thus there can be no compatible triangulation of the point sets with the given mapping in this case. We use a similar argument for the case when $a$ breaks edges $(2,5)$ and $(2,6)$. 
\begin{figure}[h]
	\center
  \includegraphics[width=0.6\linewidth]{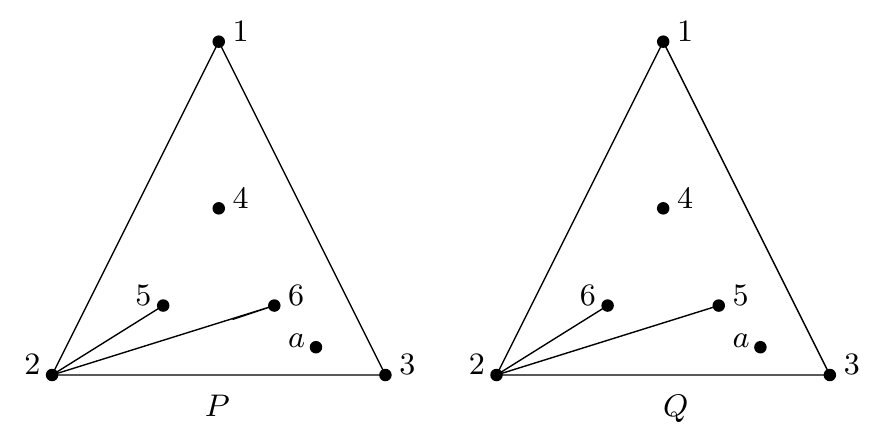}
  \caption{Edges $(2,5)$ and $(2,6)$ appear in different order in the cyclic ordering of edges around point $2$ in $P$ and $Q$.}
  \label{fig:Need-Two-Steiner-Points_Case1}
\end{figure}

Next consider  placing $a$ so that it breaks edge $(3,6)$ in $P$ and edge $(2,6)$ in $Q$. 
Claim~\ref{claim:forced-edges-to-CH} forces edges $(a,3)$ and $(a,6)$ in $P$. To break edge $(2,6)$ in $Q$, $a$ must be placed below the line through $(3,6)$, hence $a$ comes before point $5$ in the clockwise ordering of neighbours of $3$ in $Q$. To obtain the same ordering in $P$, we need $a$ below the line $(3,5)$, but then edge $(a,6)$ crosses edge $(3,5)$ as $6$ is above $(3,5)$ as shown in Figure~\ref{fig:Need-Two-Steiner-Points_Case2}. Therefore we still cannot compatibly triangulate the point sets with the given mapping. 
Finally, the case where we place $a$ to break edge $(2,5)$ in $P$ and edge $(3,5)$ in $Q$ is symmetric. 
\begin{figure}[h]
	\center
  \includegraphics[width=0.6\linewidth]{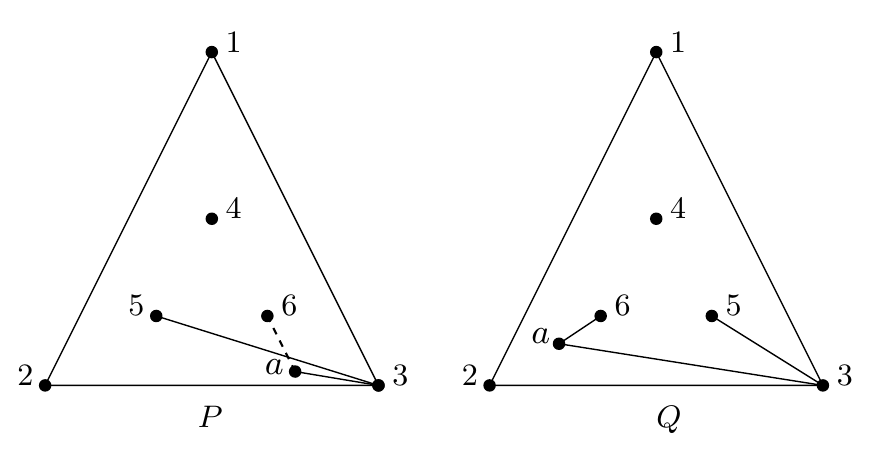}
  \caption{To obtain the same cyclic ordering of edges around point $3$ in $P$ and $Q$, we need Steiner point $a$ below $(3,5)$ in $P$, causing $(a,6)$ to cross $(3,5)$}
  \label{fig:Need-Two-Steiner-Points_Case2}
\end{figure}

\end{proof}

\section*{Acknowledgements}
This research was accomplished as part of a two day undergraduate research workshop at the University of Waterloo in October 2016.  

\bibliographystyle{plain}
\bibliography{references}

\end{document}